\documentclass[submission,copyright,creativecommons,british]{eptcs}
\usepackage{underscore}           
\makeatletter

\PassOptionsToPackage{pdftex,dvipsnames}{xcolor}
\RequirePackage{xcolor}

\usepackage[final]{microtype}
\usepackage[shorthands=off,safe=none,math=normal]{babel}



\usepackage{mathtools}
\mathtoolsset{mathic=true}

\usepackage[notext,lcgreekalpha,upint]{stix}
\usepackage[cal=cm,scr=boondoxo]{mathalfa}

\let\circ\vysmwhtcircle

\chardef\fml@textampersand\the\&
\mathchardef\fml@mathampersand\mathcode`\&
\protected\def\&{%
  \ifmmode
    \expandafter\fml@mathampersand
  \else
    \expandafter\fml@textampersand
  \fi}

\newcommand*\obeyampersands{\catcode`\&=\active}
\newcommand*\texampersand{&}
{\obeyampersands\global\let&=\texampersand}
\mathcode`\&="8000\relax
\AtBeginDocument{\obeyampersands}

\protected\def\fml@dollar{\csname\ifmmode)\else(\fi\space\endcsname}
\newcommand*\obeydollars{\catcode`\$=\active}
{\obeydollars\global\let$=\fml@dollar}
\mathcode`\$="8000\relax
\AtBeginDocument{\obeydollars}

\newcommand*\noic{\sb{}\kern-\scriptspace\relax}


\usepackage{ltxcmds}
\usepackage{etoolbox}

\protected\def\fml@opt@sb{%
  \ltx@ifnextchar@nospace[\fml@opt@sb@{}}
\def\fml@opt@sb@[#1]{_{#1}}

\def\do#1{%
  \protected\csdef{#1s}{#1\fml@opt@sb}%
  \protected\csdef{#1rm}{\mathrm#1\fml@opt@sb}%
  \protected\csdef{#1bf}{\mathbf#1\fml@opt@sb}%
  \protected\csdef{#1sf}{\mathsfit#1\fml@opt@sb}
  \protected\csdef{#1bb}{\mathbb#1\fml@opt@sb}%
  \protected\csdef{#1cal}{\mathcal#1\fml@opt@sb}%
  \protected\csdef{#1scr}{\mathscr#1\fml@opt@sb}%
  \protected\csdef{#1frak}{\mathfrak#1\fml@opt@sb}%
}
\forcsvlist\csundef{ss,SS}
\docsvlist{ A,B,C,D,E,F,G,H,I,J,K,L,M,N,O,P,Q,R,S,T,U,V,W,X,Y,Z,
            a,b,c,d,e,f,g,h,i,j,k,l,m,n,o,p,q,r,s,t,u,v,w,x,y,z, }

\def\fml@tmp#1=#2{\ifdef#1{}{\let#1#2}}
\def\do#1{\fml@tmp#1}
\docsvlist{ \Alpha=A, \Beta=B, \Epsilon=E, \Zeta=Z, \Eta=H, \Iota=I, \Kappa=K,
            \Mu=M, \Nu=N, \Omicron=O, \Rho=P, \Tau=T, \Chi=X, \omicron=o, }
\def\do#1{%
  \letcs\fml@tmp{#1}%
  \csletcs{#1}{var#1}%
  \cslet{var#1}\fml@tmp}
\docsvlist{ epsilon, phi, }

\def\fml@tmp#1{\ltx@LocToksA\expandafter{\ltx@gobble#1}}
\def\do#1{%
  \expandafter\fml@tmp\expandafter{\string#1}%
  \protected\csdef{\the\ltx@LocToksA s}{\mathit#1\fml@opt@sb}%
  \protected\csdef{\the\ltx@LocToksA rm}{\mathrm#1\fml@opt@sb}%
  \protected\csdef{\the\ltx@LocToksA bf}{\mathbf#1\fml@opt@sb}%
  \protected\csdef{\the\ltx@LocToksA sf}{\mathsfit#1\fml@opt@sb}
}
\docsvlist{ \Alpha, \Beta,    \Gamma,   \Delta, \Epsilon, \Zeta,
            \Eta,   \Theta,   \Iota,    \Kappa, \Lambda,  \Mu,
            \Nu,    \Xi,      \Omicron, \Pi,    \Rho,     \Sigma,
            \Tau,   \Upsilon, \Phi,     \Chi,   \Psi,     \Omega,
            \alpha, \beta,    \gamma,   \delta, \epsilon, \zeta,
            \eta,   \theta,   \iota,    \kappa, \lambda,  \mu,
            \nu,    \xi,      \omicron, \pi,    \rho,     \sigma,
            \tau,   \upsilon, \phi,     \chi,   \psi,     \omega, }
\docsvlist{ \nabla, \partial, \digamma,
            \varepsilon, \vartheta, \varpi, \varrho, \varsigma, \varphi, }


\newmuskip\verythinmuskip
\AtBeginDocument{%
  \thickmuskip=5mu plus 3mu minus 1mu\relax
  \medmuskip=4mu plus 2mu minus 3mu\relax
  \thinmuskip=3mu plus 1mu minus 1mu\relax
  \verythinmuskip=1.5mu plus 1mu minus 1mu\relax}

\newrobustcmd*\tab{\mathchoice
  {\mskip\thickmuskip}
  {\mskip\medmuskip}
  {\mskip\thinmuskip}
  {\mskip\verythinmuskip}}

\usepackage{mleftright}

\newrobustcmd*\st[1][\middle]{%
  \nonscript\tab#1\vert\allowbreak\nonscript\tab\mathopen{}}




\usepackage{tikz-cd}
\usepackage{relsize}

\let\tikz@ensure@dollar@catcode\relax

\usetikzlibrary{arrows}
\usetikzlibrary{shapes.geometric}

\tikzset{%
  loop/.style={%
    every curve to/.append style={every loop},
    to path={\pgfextra{}\tikz@to@curve@path},
  },
  every loop/.append style={looseness=6,min distance=1ex},
  loop right/.append style={out=30,in=-30},
  loop above/.append style={out=120,in=60},
  loop left/.append style={out=210,in=150},
  loop below/.append style={out=300,in=240},
}

\tikzcdset{%
  every label/.style={%
    /tikz/auto,
    /tikz/font={\relsize{-2}},
    /tikz/inner sep=+0.5ex},
}

\tikzcdset{%
  arrow style=tikz,
  tikzcd to/.tip={to},
  tikzcd bar/.tip={|},
  tikzcd left hook/.tip={left hook},
  tikzcd implies/.tip={implies},
}



%

\usepackage{amsthm}



\newcommand{\co}{\mathbin{\circ}}

\iffalse
\newcommand{\cat}[1]{\mathord{\mathscr{#1}}}

\newcommand{\Tape}{\mathord{\mathscr{T\!\!a\!\!p\!e}}}
\newcommand{\commaCat}[2]{#1/#2}
\newcommand{\commaDomain}{\operatorname{dom}}
\newcommand{\commaCodomain}{\operatorname{cod}}
\newcommand{\commaArrowTrans}{\operatorname{mid}}
\newcommand{\commaLift}[1]{#1^{\triangleleft}}
\newcommand{\localAdjoint}[1]{#1_{\triangleleft}}
\else

\newcommand{\cat}[1]{\mathord{\mathbb{#1}}}

\newcommand{\Tape}{\mathord{\mathsf{Tape}}}

\newcommand{\commaCat}[2]{#1/#2}
\newcommand{\commaDomain}{\operatorname{dom}}
\newcommand{\commaCodomain}{\operatorname{cod}}
\newcommand{\commaArrowTrans}{\operatorname{mid}}
\newcommand{\commaLift}[1]{#1^{\triangleleft}}
\newcommand{\localAdjoint}[1]{#1_{\triangleleft}}

\fi


\newcommand{\fcellw}{\square}
\newcommand{\fcellb}{\blacksquare}
\newcommand{\cellw}{\kern-0.1em\square\kern-0.02em}
\newcommand{\cellb}{\kern-0.072em\blacksquare}


\newcommand{\definiendum}[1]{\textbf{#1}}


\newtheorem{theorem}{Theorem}
\newtheorem{proposition}[theorem]{Proposition}
\newtheorem{corollary}[theorem]{Corollary}


\theoremstyle{definition} %

\newtheorem{definition}{Definition}

%
%


\newcommand{\fctl}[3]{\begin{tikzcd}[ampersand
    replacement=\&, cramped, sep=small]#1\colon #2 \arrow[r] \& #3 \end{tikzcd}}

\newcommand{\fromto}[2]{\begin{tikzcd}[ampersand
    replacement=\&, cramped, sep=small
]#1\arrow[r]\& #2\end{tikzcd}}


\newcommand{\ccat}{\commaCat}
\newcommand{\dom}{\commaDomain}
\newcommand{\cod}{\commaCodomain}
\newcommand{\cnat}{\commaArrowTrans}
\newcommand{\clift}{\commaLift}
\newcommand{\ladj}{\localAdjoint}
\newcommand{\defin}{\definiendum}




\title{A Category Theoretic Interpretation \\ of Gandy's Principles for Mechanisms}
\author{Joseph Razavi \qquad\qquad Andrea Schalk
	\institute{School of Computer Science\\
		University of Manchester\\
		Manchester, UK}
	\email{joseph.razavi@manchester.ac.uk \qquad\qquad andrea.schalk@manchester.ac.uk}
	}

\begin{document}
\maketitle

\begin{abstract} Based on Gandy's principles for models of
computation we give category-theoretic axioms describing locally
deterministic updates to finite objects. Rather than fixing a
particular category of states, we describe what properties such a
category should have. The computation is modelled by a functor that
encodes updating the computation, and we give an abstract account of
such functors. We show that every updating functor satisfying our
conditions is computable. 
\end{abstract}

\section{Introduction}

In a well-known paper \cite{Gandy}, Gandy sets out principles which aim to characterize the possible behaviours of a discrete,
deterministic mechanical computing device which could be realized in
the physical world. Although in Gandy's detailed axiomatization there
are four principles, they can be summarized by two conceptual
insights (as emphasized by, for instance,
\cite{ArrighiCausalGraphs,SiegAbstractParallel}): first, that states
of computation should be finite objects with a bounded amount of
local detail; second, that changes should only propagate with a
bounded velocity, and thus their effects on a given location should
be determined by a neighbourhood of finite size.

Gandy's technical realization of these principles uses axioms which
are set-theoretic in nature, the idea being that every mathematical
description of a computing machine ought to correspond to a
set-theoretical one. As long as this is done sensibly, and the
original model satisfies Gandy's conceptual principles, the resulting
formalization will satisfy Gandy's axioms. Although some later
studies keep within this set-theoretic framework
\cite{SiegAbstractParallel}, other work inspired by Gandy's
principles replaces his arbitrary hereditarily finite sets with
mathematical objects for which one has more direct spatial
intuitions, such as graphs
\cite{ArrighiCausalGraphs,ObtulowiczGPRmachines,SiegKgraphs} or
simplicial surfaces \cite{ArrighiCausalSurfaces}. Indeed, when one
has a concrete model of computation in mind, it is often easier to
conduct detailed investigations into the operation of Gandy's
principles by working with the models directly, in an environment
which respects their extra structure, rather than their set-theoretic
encodings.

In many of these models the dynamics are given in terms of a colimit
of updated versions of certain neighbourhoods of a state. This is
explicit in \cite{Maignan,ObtulowiczGPRmachines}, and suggested by the use of
unions in \cite{ArrighiCausalGraphs,ArrighiCausalSurfaces}. In \cite{Maignan} the colimit to be taken is strikingly similar to the one implied by Proposition 1 of this paper. It is interesting that two quite different intuitions about the meaning of local determinism lead naturally to some of the same categorical structures. Indeed,
if one heuristically reads Gandy's notion of restriction of a state
to a part of a state as analogous to a set of morphisms, then Gandy's
own description strongly suggests a colimit. This is evidence that
the ambient categorical structure can play an explanatory role.  

Rather than describing specific categories whose objects can serve as
states of computation, one might ask what kinds of category could
serve as a setting for Gandy-like models. We know that objects like
graphs and simplicial complexes are characterized by being colimits
of a set of generating objects. We might wonder whether the colimit
description of the dynamics follows from this, given some more direct
axiomatization of local determinism.  

In this paper, we attempt to give just such a description of a class
of categories suitable for describing states of computation in the
style of Gandy, along with a categorical description of locally
deterministic dynamics. We came to this axiomatization in an attempt
to better understand the phenomena of information flow in cellular
automata as described in the first author's thesis
\cite{RazaviPhdThesis}. For this purpose, they seem to be useful for
getting at the relevant aspects of the causal structure, and we hope,
therefore, that they may be of use to workers engaged in similar
enquiry. 

\section{Preliminaries: Comma Categories and Kan Extensions}

In this section we briefly sketch some category theoretic background,
loosely following \cite{LehnerKanExtensions,StreetFibrationsYoneda};
the reader interested in a more comprehensive overview of category
theory is directed to \cite{MaclaneCategoriesWork}.

In order to talk about things happening locally, as is required to
model Gandy's ideas, it is useful to think of a morphism
$\fctl{p}{A}{X}$ in $\cat{C}$ as a `shape $A$ located at position $p$
in $X$'. More generally, one often wants to restrict attention to the
case where the source and target of the morphisms are the output of
two functors $\fctl{F}{\cat{X}}{\cat{C}}$ and
$\fctl{G}{\cat{Y}}{\cat{C}}$. In this paper, we study such morphisms
in the standard setting: the \definiendum{comma
  category}~$\commaCat{F}{G}$. The objects of $\commaCat{F}{G}$ are
triples $(X,p,Y)$ where $X$ is an object of $\cat{X}$, $Y$ is an
object of $\cat{Y}$, and $\fctl{p}{FX}{GY}$ in~$\cat{C}$. A morphism
from $(X,p,Y)$ to $(X',p',Y')$ is a pair $(f,g)$ such that~$Gg\co
p=p'\co Ff$.

We may think of such an equation as telling us that the way
the shape $FX$ is located at position $p$ in $GY$ is compatible
with that of $FX'$ at $p'$ in $GY'$ via $Ff$ and $Gg$. 

The definition of objects of the comma category as triples allows us to define two functors, \(\dom\) and \(\cod\), to \(\cat{X}\) and \(\cat{Y}\) respectively, by projecting onto the first and last components.
The
morphisms in the middle components of these triples assemble themselves into a natural
transformation
\[
  \begin{tikzcd}[row sep=small]
    \ccat{F}{G}\arrow{r}{\dom}\arrow[swap]{d}{\cod} & \cat{X}\arrow{d}{F}
    \arrow[Rightarrow, shorten <= 2ex, shorten >=
    4ex,pos=0.3]{ld}{\cnat} \\
    \cat{Y}\arrow[swap]{r}{G}& \cat{C},
  \end{tikzcd}
\]
and this square has a universal property which defines the comma
category up to isomorphism.

When giving a comma category, if we have a subcategory $\cat{G}$ of
$\cat{C}$, we often write $\cat{G}$ to mean its inclusion into
$\cat{C}$, writing for example $\commaCat{\cat{G}}{\cat{C}}$ to mean
the comma category from the inclusion of $\cat{G}$ to the identity on
$\cat{C}$. Similarly, if $A$ is an object of $\cat{C}$, we may write
$A$ to mean the constant functor from the one-object category to $A$. As a first approximation,
one can think of a comma category $\commaCat{F}{G}$ as a sort of
asymmetric pullback, asking not what $F$ and $G$ have in common, but
what the result of probing $G$ with $F$ is. Certain properties are
then unsurprising. The domain and codomain functors on
$\commaCat{\cat{C}}{\cat{C}}$ both have a section, given by sending
each object to its identity morphism. Moreover, if $F$ has a section,
then so does the codomain functor of~$\commaCat{F}{G}$.

In the introduction, we mentioned that in many examples we encounter
functors which can be calculated by colimits. This may indicate the
presence of a Kan extension, which is defined as follows. Suppose we
have functors $\fctl{F}{\cat{X}}{\cat{D}}$ and $\fctl{G}{\cat{X}}{\cat{C}}$.
A \definiendum{(left)  Kan extension} of $G$ along $F$ is a functor 
$\fctl{L}{\cat{D}}{\cat{C}}$ equipped with a natural transformation
$\fctl{\kappa}{G}{L \co F}$ which is universal in the sense that for
every $\fctl{K}{\cat{D}}{\cat{C}}$ with $\fctl{\alpha}{G}{K \co F}$
there exists a unique mediator $\fctl{\mu}{L}{K}$ such that $\alpha$
is equal to the pasting
\[
  \begin{tikzcd}[row sep=small]
    \cat{X} \arrow{r}{G} \arrow{d}{F} &
    \cat{C}\arrow[equal]{d}\arrow[Rightarrow, shorten <= 1.7ex, shorten
    >=3ex,pos=0.3]{ld}{\kappa}\\
    \cat{D}\arrow{r}{L}\arrow[equal]{d}
    &\cat{C}\arrow[equal]{d}\arrow[Rightarrow, shorten <= 1.7ex,
    shorten >=
    3ex,pos=0.3]{ld}{\mu}\\
    \cat{D}\arrow[swap]{r}{K} &\cat{C}.
  \end{tikzcd}
\]

On first encountering the definition, it is common to
wonder whether when $G = L\co F$, the identity transformation exhibits L as
a Kan extension. One soon realizes, however, that far away from the
image of $F$, the values of $L$ may have little to do with $G$; this
will preclude having the universal property. A natural thing to
demand to improve this situation is that $F$ have a section; one
easily verifies that in this case every functor that post-composes
with $F$ to give $G$ is a Kan extension of the latter along the
former. We are interested in Kan extensions with additional
properties. 

A (left) Kan extension $L$ of $G$ along $F$ as below is called
\defin{absolute} if for all $\fctl{H}{\cat{C}}{\cat{E}}$ the
pasting
\[
  \begin{tikzcd}[row sep=small]
  \cat{X}\arrow[swap]{d}{F}\arrow{r}{G} &\cat{C}\arrow[equal]{d}
  \arrow{r}{H}\arrow[Rightarrow, shorten <= 1.5ex, shorten
    >=3ex,pos=0.3]{ld}{\kappa} &\cat{E}\arrow[equal]{d}\\
  \cat{D}\arrow[swap]{r}{L} &\cat{C}\arrow{r}{H} &\cat{E}
  \end{tikzcd}
\]
is a Kan extension. This means that having a Kan extension for $G$
gives us a Kan extension for any composite of~$G$.
For example, suppose
$\fctl{F}{\cat{C}}{\cat{D}}$, $\fctl{U}{\cat{D}}{\cat{C}}$ with a
natural transformation $\fctl{\eta}{1_{\cat{C}}}{U \co F}$. Then
$\eta$ is the unit of an adjunction with left adjoint $F$ and right
adjoint $U$ if and only if it exhibits $U$ as the Kan extension of
the identity along
$F$ and this extension is absolute. 

A Kan extension $L$ of $G$ along $F$ as above is called
\definiendum{pointwise} if whenever we have a functor
$\fctl{J}{\cat{Y}}{\cat{D}}$ and we consider the comma category
$\commaCat{F}{J}$, the pasting
\[
  \begin{tikzcd}[row sep=small]
    \ccat{F}{J} \arrow[swap]{d}{\cod}\arrow{r}{\dom} & \cat{X}
    \arrow{d}{F}\arrow{r}{G}\arrow[Rightarrow, shorten <= 1.5ex, shorten
    >=4ex,pos=0.3]{ld}{\cnat} &
    \cat{C}\arrow[equal]{d}\arrow[Rightarrow, shorten <= 1.5ex, shorten 
    >=3ex,pos=0.3]{ld}{\kappa} \\
    \cat{Y}\arrow{r}{J}&\cat{D}\arrow{r}{L} &\cat{C}
  \end{tikzcd}
\]
is a Kan extension. We can think of a pointwise Kan extension as
being `locally a Kan extension' in that whenever we `probe' $L$ with
a functor $J$ as above, we still get a Kan extension. One reason
pointwise Kan extensions are useful is that for any object $D$ of
$\cat{D}$ we can compute $LD$ as the colimit of the functor given by
precomposing $G$ with
$\fctl{\commaDomain}{\commaCat{F}{D}}{\cat{X}}$, which means that if
we take the big diagram with one copy of each object $X$ of $\cat{X}$
for every morphism from $\fromto{FX}{D}$, and arrows
between them making commuting triangles, then $D$ is the colimit.

One family of examples is given by the fact that every absolute Kan
extension is pointwise. Another example which one frequently
encounters is the idea of a \definiendum{dense subcategory}. A
subcategory $\cat{G}$ of $\cat{C}$ is called \definiendum{dense} if
and only if the identity on $\cat{C}$ is the pointwise Kan extension
of the inclusion of $\cat{G}$ into $\cat{C}$ along itself, meaning
that every object of $\cat{C}$ is the colimit of all ways of mapping
objects of $\cat{G}$ into it. We think of objects in $\cat{C}$ as
regions generated by the shapes in the dense subcategory~$\cat{G}$.
In the sequel, this is used to capture the notion of `finite detail'.
Since every morphism out of a colimit is uniquely determined by
suitable morphisms out of the constituent parts this also
characterizes the morphisms in $\cat{C}$, which are determined by the
way they act on morphisms out of $\cat{G}$ by postcomposition.
Similar reasoning shows that the domain of every object of
$\commaCat{\cat{C}}{\cat{C}}$ is the colimit of all the domains of
morphisms of $\commaCat{\cat{G}}{\cat{C}}$ into it (to see that one
does not end up with spurious extra copies of objects of $\cat{G}$,
the important observation is that every morphism in
$\commaCat{\cat{C}}{\cat{C}}$ factors as a morphism with an identity
of $\cat{C}$ in its domain, followed by one with an identity of
$\cat{C}$ in its codomain). This implies that the identity natural
transformation corresponding to the commuting diagram 
\[
  \begin{tikzcd}[row sep=small]
    \ccat{\cat{G}}{\cat{C}} \arrow[hookrightarrow]{r}\arrow[hookrightarrow]{d}
    &\ccat{\cat{C}}{\cat{C}} \arrow[equal]{d}\arrow{r}{\dom}
    &\cat{C}\arrow[equal]{d}\\
    \ccat{\cat{C}}{\cat{C}}\arrow[equal]{r}
    &\ccat{\cat{C}}{\cat{C}}\arrow{r}{\dom} &\cat{C}
  \end{tikzcd}
\]
is a pointwise Kan extension. The codomains of morphisms play
little role in this argument, and indeed one can replace
$\commaCat{\cat{G}}{\cat{C}}$ and $\commaCat{\cat{C}}{\cat{C}}$
in the above with $\commaCat{\cat{G}}{U}$ and
$\commaCat{\cat{G}}{U}$ respectively, for any functor $U$ into
$\cat{C}$.

Kan extensions enjoy the property that if one vertically composes a
collection of Kan extension squares, the result is again a Kan
extension. Moreover, if every component of the pasting is absolute or
pointwise, so is the result.

\section{Finite Objects and Local Determinism}

Suppose one has in mind a collection of spatially extended states,
and incomplete parts of states, for a `mechanical' model of
computation, along with a notion for how the various parts fit
together to form larger parts and complete states. These constitute a
category $\cat{C}$. Following Gandy, we stipulate that, if the
model is to be truly mechanical, these states must be finitely big
and have a finite amount of possible local detail. 

Let us first consider the restriction on $\cat{C}$ corresponding to
the objects being `finitely big'. Given two finite combinatorial
objects, we expect that the number of ways in which one fits into the
other to be finite. This means that all hom-sets in $\cat{C}$
should be finite; one says that $\cat{C}$ is `locally finite'. 

Only slightly more subtle is the issue of finite local detail. The
idea is that there should be a finite set of objects such that any
object $X$ is determined by the ways in which these objects map
into $X$. This is exactly the condition described above for a dense
sub-category, so this is the notion we use here.

One example of a locally finite category with a finite dense
subcategory is the category of finite graphs: there are finitely many
graph homomorphisms between any two graphs, and every graph can be
constructed as a colimit of nodes and edges. This is, however, a
somewhat unnatural example for our purposes, since allowing nodes of
arbitrarily high degree means that the notion of `locally finitely
detailed' can not mean `locally' in the usual sense for graphs. It is
more natural to take graphs with a fixed bound on the degree. 

Another example, which we return to below, is the category $\Tape$
whose objects are strings over the alphabet $\{\fcellb,\fcellw\}$,
and where a morphism from $A$ to $B$ is a position in $B$ at
which $A$ occurs as a substring. For example, there are two
morphisms from $\mathord{\fcellb\cellw\cellb\cellb}$ to
$\mathord{\fcellb\cellw\cellb\cellb\cellw\cellb\cellb}$, since it
occurs at positions $0$ and $3$. For technical reasons, it is
useful to assume that the empty string is an initial object,
occurring just once in every string. The full subcategory on the
objects
$\{\fcellb,\fcellw,\fcellb\cellb,\fcellb\cellw,\fcellw\cellb,\fcellw\cellw\}$
is dense, because any string is determined by knowing its consecutive
pairs of letters, and which pairs share individual letters. In a
similar vein, one might think of objects made of a finite number of
tiles (even of, say, a Penrose tiling), generated by gluing together
neighbourhoods where they share tiles.


The issue of local determinism is more interesting. First, we suppose
that the way in which states are updated by the dynamics is given by
a functor $\fctl{U}{\cat{C}}{\cat{C}}$. Functoriality is a reasonable
requirement, meaning that the updated versions of the parts of a
state should fit together in the updated version of that state. It
loosely corresponds to Gandy's idea that the update function in his
model should be `structural', able only to observe the way atomic
parts of the state fit together, not the names we have given them. 

As an example, let us consider the one dimensional cellular automaton
on the alphabet $\{\fcellb,\fcellw\}$ in which a white cell becomes
black if any of its neighbours is black. We can model this as a
functor $\fctl{U}{\Tape}{\Tape}$ which removes the two outer cells in
a string (returning the empty string if there are too few cells) and
updates those in the middle according to the rule. For example
$U(\mathord{\fcellb\cellw\cellw\cellw\cellb\cellw}) =
\mathord{\fcellb\cellw\cellb\cellb}$ Functoriality of $U$
corresponds roughly to the fact that the update can not make use of,
for example, the position of a cell in the sequence to determine its
new colour. 

Now we come to local determinism itself. Suppose we update a state,
and then look at a part of the updated situation. This amounts to
considering a map $\fctl{p}{A}{UX}$ for some $A$ and $X$ in
$\cat{C}$. We want to remember the state $X$ we are updating, so it
is best to consider $p$ as an object of $\commaCat{\cat{C}}{U}$.
Since we postulate that the action of $U$ is locally deterministic,
this local effect must have had some local cause $\fctl{f}{N}{Y}$,
for some $N$ and $Y$ in $\cat{C}$, which explains why $p$ occurs in
$UX$ in the following sense. If we update $f$ to get
$\fctl{Uf}{UN}{U}Y$, we should find $p$ inside the result via a
morphism $\fctl{\eta}{p}{Uf}$ in $\commaCat{\cat{C}}{U}$. Because of
the way the comma category is defined, this implies that there is a
morphism $\fctl{\commaCodomain(\eta)}{X}{Y}$ in $\cat{C}$. We think
of $f$ as a `causal neighbourhood' of $p$. Although we could, if we
wished, choose any causal neighbourhood, it is best to choose one
which comes with a reasoning principle. Since we are supposing that
`changes propagate with a bounded velocity', there ought to be a
minimal causal neighbourhood such that any sequence of changes which
could have contributed to the updated state $p$ must have passed into
it. This amounts to saying that for every other causal neighbourhood, that is for every $\fctl{g}{M}{Z}$ in
$\cat{C}$ with a morphism $\fctl{\gamma}{p}{Ug}$ in
$\commaCat{\cat{C}}{U}$, there is a unique $\fctl{\hat{\gamma}}{f}{g}$
in $\commaCat{\cat{C}}{\cat{C}}$ such that
$\gamma = U\hat{\gamma} \co \eta$.


Since we want this assignment of a causal neighbourhood to vary naturally as we vary $p$, it
amounts to an adjunction; but an adjunction where? The simplest
option is to consider the functor
$\fctl{\commaLift{U}}{\commaCat{\cat{C}}{\cat{C}}}{\commaCat{\cat{C}}{U}}$
which takes 
an object $(X,\fctl{f}{X}{Y},Y)$ of $\commaCat{\cat{C}}{\cat{C}}$ to
$(UX,Uf,Y)$ in $\commaCat{\cat{C}}{U}$. Note that the target
component $Y$ is unchanged. It is for this functor that we demand a
left adjoint, say $\localAdjoint{F}$. 

What does this all mean in our example? Consider the occurrence of
$\mathord{\fcellb\cellb}$ at position $2$ in
$\mathord{\fcellb\cellw\cellb\cellb}$. We must find a causal
neighbourhood which explains why it occurs. We choose the occurrence
of $\fcellw\cellw\cellb\cellw$ at the end of
$\mathord{\fcellb\cellw\cellw\cellw\cellb\cellw}$. This updates
precisely to the part we wanted, and it must be the universal
explanation since anything smaller would only be updated to a single
cell. Now the reader may be troubled by a subtle point: one might
have expected that the explanation for the above should be the
occurrence of $\mathord{\fcellw\cellb}$ at position $3$ in the
input, since we know intuitively that its left-hand square will turn
black. If we had defined $U$ to make use of this knowledge,
however, then we could not have defined $\localAdjoint{F}$ in a
functorial manner, since we would have had
$U(\mathord{\fcellb\cellw\cellb}) = \mathord{\fcellb\cellb\cellb}$.
But then the middle $\fcellb$ could be explained just as well by
$\fcellb\cellw$ on the left as by $\fcellw\cellb$ on the right.
This is why we think in terms of neighbourhoods through which
\emph{all} causal influences on an updated part must have passed.
This issue is related to the problem of `overdetermination' familiar
to philosophers (see e.g. \cite{SchafferOverdetermination}). 

To enforce the bounded speed of propagation, we need to stipulate
that $\localAdjoint{F}$ have a certain finiteness property. For any
given object $A$, there are many possible shapes of causal
neighbourhood for parts of shape $A$, depending on the context in
which we find them. However, since there is a
bound on the size of these, and there can only be finitely many
objects of this size, we know that any part $\fctl{p}{A}{UX}$ has
only finitely many different shapes of possible causal
neighbourhoods. Therefore we demand whenever we have a subcategory
$\cat{S}$ of $\commaCat{\cat{C}}{U}$ whose image under the domain
functor, $\commaDomain[\cat{S}]$, is finite, the domain of its
image under $\localAdjoint{{F}}$, which is $(\commaDomain \co
\localAdjoint{F})[\cat{S}]$, is also finite. 

In our example, consider all morphisms out of the object $\fcellb$
into outputs of $U$. Although this is an infinite set of morphisms
(because there are an infinite number of strings containing
$\fcellb$), each of these occurrences is explained by a
substring in the input of one of the forms
$\mathord{\fcellb\cellb\cellb}$, $\mathord{\fcellb\cellb\cellw}$,
$\mathord{\fcellb\cellw\cellb}$, $\mathord{\fcellb\cellw\cellw}$,
$\mathord{\fcellw\cellb\cellb}$, $\mathord{\fcellw\cellb\cellw}$,
or $\mathord{\fcellw\cellw\cellb}$. If we did not stipulate that
this collection be finite, then different occurrences of $\fcellb$
in different outputs of $U$ could have been explained by substrings
of arbitrary length in the input. This would allow such behaviour as
outputting a single cell which is black if and only if the input
codes a halting Turing machine! 

The example we have given is somewhat restrictive, since the partial
states we mention always shrink, and otherwise do not change shape
very much. This is not a necessary limitation. For instance, we could
introduce special `blank' cells on the ends of the strings which,
rather than being removed by $U$, are duplicated to allow the
working area to grow. More radical changes to the intuitive shape of
parts are possible, but harder to describe. Putting all our
conditions together we get the following.

\begin{definition} A \definiendum{categorical Gandy machine} is an
endofunctor $\fctl{U }{\cat{C}}{\cat{C}}$ where $\cat{C}$ is a
locally finite category with a finite dense subcategory, such that:
  \begin{itemize}
  \item the induced functor
    $\fctl{\commaLift{U}}{\commaCat{\cat{C}}{\cat{C}}}{\commaCat{\cat{C}}{U}}$
    has a left adjoint $\localAdjoint{F}$, and
  \item for all subcategories $\cat{S}$ of
$\commaCat{\cat{C}}{U}$ such that $\commaDomain[\cat{S}]$, is
finite, we have that $(\commaDomain \co \localAdjoint{F})[\cat{S}]$
is also finite.
  \end{itemize}
\end{definition}

We are now in a position to prove that every $U$ satisfying these
conditions is computable. We do this in two steps.

\begin{proposition} Let $\fctl{U}{\cat{C}}{\cat{C}}$ be a categorical
  Gandy machine. Let $\cat{G}$ be a dense subcategory of $\cat{C}$ which gives an induced inclusion
  $\begin{tikzcd}\commaCat{\cat{G}}{U}\arrow[hookrightarrow]{r}&\commaCat{\cat{C}}{U}
  \end{tikzcd}$.
    Then $U$ is the pointwise (left) Kan extension of $\commaDomain$
    along $(\commaDomain \co \localAdjoint{F})$ when both are
    precomposed with this inclusion.
\end{proposition}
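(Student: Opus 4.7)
The plan is to express $U$ as the vertical paste of three left Kan extensions, one for each piece of structure to hand: the density of $\cat{G}$, the adjunction $\ladj{F} \dashv \clift{U}$, and the identity section of $\dom\colon \commaCat{\cat{C}}{\cat{C}} \to \cat{C}$. Throughout, write $I$ for the inclusion $\commaCat{\cat{G}}{U} \hookrightarrow \commaCat{\cat{C}}{U}$.

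The density computation recorded at the end of the preliminaries already shows that $\dom\colon \commaCat{\cat{C}}{U} \to \cat{C}$ is the pointwise left Kan extension of $\dom \co I$ along $I$, with identity 2-cell. The adjunction $\ladj{F} \dashv \clift{U}$ exhibits $\clift{U}$ as the absolute left Kan extension of the identity on $\commaCat{\cat{C}}{U}$ along $\ladj{F}$; post-composing with $\dom\colon \commaCat{\cat{C}}{\cat{C}} \to \cat{C}$, which absoluteness permits, yields $\dom \co \clift{U} = U \co \dom$ as the absolute left Kan extension of $\dom\colon \commaCat{\cat{C}}{U} \to \cat{C}$ along $\ladj{F}$, with unit the domain component of the adjunction unit. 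Vertically pasting these two produces a pointwise left Kan extension of $\dom \co I$ along $\ladj{F} \co I$, valued at $U \co \dom$.

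The remaining step is to descend from $U \co \dom$ to $U$ by a further Kan extension along $\dom$. For this I will use the identity section $s\colon \cat{C} \to \commaCat{\cat{C}}{\cat{C}}$ sending $X$ to $(X, 1_X, X)$, which is left adjoint to $\dom$ with identity unit and counit $\epsilon_{(A, f, B)} = (1_A, f)$. Naturality of any transformation $\alpha\colon U \co \dom \to K \co \dom$ applied at $\epsilon_{(A, f, B)}$ collapses to $\alpha_{(A, f, B)} = \alpha_{(A, 1_A, A)}$, so $\alpha$ is determined by $\alpha \cdot s\colon U \to K$, which one checks is the unique mediator. The argument is unchanged if $U$ is replaced by $H \co U$ for any $H$, so $U$ is the \emph{absolute} left Kan extension of $U \co \dom$ along $\dom$, with identity 2-cell. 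Pasting this square onto the previous one gives the desired pointwise left Kan extension of $\dom \co I$ along $\dom \co \ladj{F} \co I$, valued at $U$.

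The bulk of the proof is routine assembly of machinery from the preliminaries. The one place where care is needed is the descent step: one has to recognise that the identity section $s$ of $\dom$ (already noted to exist in the preliminaries) lifts to an adjunction $s \dashv \dom$, and that its counit supplies exactly the naturality identity that strips the trailing $\dom$ off the composite.
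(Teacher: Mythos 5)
Your proof is correct and follows essentially the same route as the paper's: the same vertical pasting of three pointwise Kan extension squares (density of $\cat{G}$, the adjunction $\ladj{F} \dashv \clift{U}$ whiskered by $\dom$, and the final descent along $\fctl{\dom}{\commaCat{\cat{C}}{\cat{C}}}{\cat{C}}$). If anything, your treatment of the last square is more careful than the paper's, which merely appeals to $\dom$ having a section; your observation that the section is a left adjoint $s \dashv \dom$ whose counit is sent to identities by $\dom$ is exactly the extra ingredient needed to make the mediator well defined, since a bare section would not suffice in general.
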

\begin{proof} Let \(\eta\) be the unit of the adjunction between \(\clift{U}\) and \(\ladj{F}\), and consider the diagram
\[
  \begin{tikzcd}[row sep=small]
    \ccat{\cat{G}}{U} \arrow[hookrightarrow]{d} \arrow[hookrightarrow]{r}
    &\ccat{\cat{C}}{U} \arrow{r}{\dom}\arrow[equal]{d}
    &\cat{C}\arrow[equal]{d}\\
    \ccat{\cat{C}}{U}\arrow{d}{\ladj{F}} \arrow[equal]{r}
    &\ccat{\cat{C}}{U}\arrow[Rightarrow, shorten <= 1ex, shorten
    >=3ex,pos=0.2]{ld}{\eta}\arrow[equal]{d}\arrow{r}{\dom}
    &\cat{C}\arrow[equal]{d}\\
    \ccat{\cat{C}}{\cat{C}}\arrow{d}{\dom}\arrow{r}{\clift{U}}
    &\ccat{\cat{C}}{U}\arrow{r}{\dom} &\cat{C}\arrow[equal]{d}\\
   \cat{C}\arrow{rr}{U} &&\cat{C}.
  \end{tikzcd}
\]
	
  The top row is a pointwise Kan extension by density of $\cat{G}$.
The middle row is a pointwise Kan extension since its left-hand
square comes from an adjunction. It is thus an absolute Kan
extension, preserved by $\commaDomain$. The bottom row is a
pointwise Kan extension since it commutes and $\commaDomain$ has a
section. Then the result follows by pasting of pointwise Kan
extensions.
\end{proof}

\begin{corollary} Let $\fctl{U}{\cat{C}}{\cat{C}}$ be a categorical
Gandy machine. Suppose that objects of $\cat{C}$ are represented by
diagrams in the finite dense subcategory for which they are the
colimit. Then $U$ is computable.
\end{corollary}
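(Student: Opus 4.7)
The plan is to turn the Proposition into a computation. The Proposition tells us that $UX$ is a pointwise left Kan extension, and therefore arises as the colimit of $\dom$ over the comma category $((\dom \co \ladj{F}) \downarrow X)$ restricted to $\ccat{\cat{G}}{U}$. A typical object of this indexing category is a pair $(p,\kappa)$ with $p \colon A \to UZ$ in $\ccat{\cat{G}}{U}$ (so $A \in \cat{G}$), together with $\kappa \colon N \to X$ in $\cat{C}$, where $N = \dom(\ladj{F}(p))$. I would show that this colimit admits a finite presentation that can be extracted from the given colimit description of $X$.

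The first step is to establish finiteness of the relevant data. There are finitely many $A \in \cat{G}$ by hypothesis. For each $A$, taking $\cat{S}_A$ to be the subcategory of $\ccat{\cat{C}}{U}$ of all morphisms with domain $A$, we have $\dom[\cat{S}_A] = \{A\}$ finite, and so the Gandy machine's finiteness axiom makes $(\dom \co \ladj{F})[\cat{S}_A]$ a finite set of causal-neighbourhood shapes $N$. For each such $N$, $\mathrm{Hom}_{\cat{C}}(N,X)$ is finite by local finiteness and enumerable from the colimit presentation of $X$ together with density of $\cat{G}$. The same reasoning bounds the morphisms between the relevant pairs. Because $\dom$ lands in the finite category $\cat{G}$, the resulting indexing diagram is a finite diagram in $\cat{G}$, whose colimit gives $UX$ in the required form and can be computed by amalgamating the diagrams describing its vertices.

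The main obstacles I anticipate are twofold. First, the algorithm as described presumes that the machine is presented in enough detail to evaluate $\ladj{F}$ on morphisms out of $\cat{G}$-objects, at least up to the finite images guaranteed by the axioms; this data is essentially finite and should be regarded as part of an effective specification. Second, one needs to verify that the finite portion of $((\dom \co \ladj{F}) \downarrow X)|_{\ccat{\cat{G}}{U}}$ extracted above is cofinal in --- or at least yields the same colimit as --- the full indexing category, so that its colimit reproduces $UX$ rather than merely a quotient of it. This should follow from the adjunction $\ladj{F} \dashv \clift{U}$, by showing that any $(p,\kappa)$ in the full indexing category factors through one of the chosen finite representatives, but nailing down this cofinality argument is the delicate heart of the proof.
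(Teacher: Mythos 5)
Your proposal follows essentially the same route as the paper's own proof: invoke the Proposition to express $UX$ as a colimit over causal neighbourhoods, apply the Gandy finiteness axiom to the domains of $\ccat{\cat{G}}{U}$ to precompute the finite, $X$-independent set of neighbourhood shapes, and then use local finiteness together with density of $\cat{G}$ to enumerate the morphisms into $X$. The cofinality subtlety you flag as the ``delicate heart'' is a genuine point, but the paper's proof elides it too, simply asserting that the resulting finite diagram of commuting triangles computes the colimit.
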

\begin{proof} Suppose for some object $X$ of $\cat{C}$, we want to
  compute $UX$. The above Proposition implies that $UX$ is the
  colimit of all objects which can occur as the domain of an object
  $g$ of $\commaCat{\cat{G}}{U}$ such that the domain of
  $\localAdjoint{F}g$ admits a morphism into $X$. Since $\cat{G}$ is
  finite, $\commaDomain[\commaCat{\cat{G}}{U}] = \cat{G}$ is finite.
  Hence, $\commaDomain[\localAdjoint{F}[\commaCat{\cat{G}}{U}]]$ is
  finite. This is the category in which the diagram whose colimit is
  $UX$ must take its values. Note that, in addition to being finite,
  it is independent of $X$. Therefore, it can be `precomputed' for
  use in evaluating $U$.
  
  In order to evaluate $UX$, all commuting
  triangles with one side in this precomputed category and apex $X$
  must be computed. Since $\cat{C}$ is locally finite, this is a
  finite diagram (and it can actually be computed Since a morphism
  into $X$ is determined by its action on morphisms in $\cat{G}$, and
  one can `try all possibilities' for such actions; if desired, a
  similar idea can be used to ensure we have the `standard' diagram
  whose colimit~is~$X$).
\end{proof}

\section{Discussion and Future Work}

The present work is the first step in a programme to study spatial
models of computation from a categorical perspective. Our original
motivation was to study the flow of information in such models along
the lines of \cite{RazaviPhdThesis}. We are particularly interested
in the question of where in the spatially extended state is stored
which piece of information about the computation. In previous work we
lacked a good axiomatization of suitable structures which we now
provide. We believe this framework to be quite robust, and in this
paper we show how its definition may be thought to arise from Gandy's
principles.

A natural next step is to strengthen the connections with these
principles by investigating whether every example satisfying Gandy's
original axioms can be interpreted in this framework. One obvious
hurdle to overcome is that one has to choose a sensible category of
states. This can not always have Gandy's set of states as objects.
For example, Gandy points out that the set of finite structures for a
first-order logical signature satisfies his axioms. However, if the
signature contains function symbols, then the category of structures,
while locally finite, does not have a finite dense subcategory. For
example, consider a signature with a single function symbol, and
interpret this by successor taken \emph{modulo} $n$ on the set
$\{0,...,n-1\}$. These structures each admit no incoming morphisms
from any other structure, and form an infinite family. The solution
is to consider some notion of partial structure. This corresponds to
adding objects not only for Gandy's states, but also enough parts to
cover at least the ones important for the structure
of the machine.

It may also happen that the definition we give corresponds to a
well-behaved subclass of Gandy's machines (and perhaps to
well-behaved classes of examples of the other models inspired by
Gandy). An example of the limitations of the present model is given
by the functor on the category of finite directed graphs which,
wherever it finds a path of length two $x \to y \to z$ in its
input, adds a direct edge $x \to z$, taking a single step carried
out by the obvious algorithm for transitive closure. This is not an
instance of the present framework, since it will add the diagonal to
a rectangle like 
\[ 
  \begin{tikzcd}[row sep=small]
\bullet \arrow{d}\arrow{r} &\bullet\arrow{d}\\
\bullet\arrow{r}&\bullet    
  \end{tikzcd}
\]
but there is no unique smallest causal neighbourhood. This is
related to the problem of overdetermination mentioned above. This
operation is, intuitively, locally deterministic for the usual notion
of `locally' in a graph. However, as we discussed earlier, if nodes
of arbitrarily high degree are allowed, then our definition takes a
strange view of the meaning of `local'.

Comparisons with later models, especially \cite{Maignan}, may be more direct. In \cite{ArrighiCausalGraphs}, a definition of local determinism is given in which a phenomenon in an updated graph is explained by a sub-graph of bounded radius around an `antecedent' in the previous state. This is shown to be equivalent to taking a union of local updates. The analogy between this result and Proposition 1 above raises the question whether the other results of \cite{ArrighiCausalGraphs} have counterparts in the present abstract setting. A different abstract perspective on this idea, based on topology, is presented in \cite{ArrighiCayley1,ArrighiCayley2}; this may aid the comparison with \cite{ArrighiCausalGraphs}. To facilitate this study, a full version of the present work containing more detailed proofs and examples will be presented in a forthcoming paper.

Other important questions are suggested by the key role played by finite categories in Definition 1. It would be interesting if the present extrinsic notion of finite category could be replaced by a more intrinsic condition. In \cite{HaeuslerFiniteness}, the impact on models of computation of varying the notion of finiteness is discussed.

Beyond this, in future work we plan to return to our original aims,
and put this axiomatization to use to give a cleaner account of
\cite{RazaviPhdThesis}, and the flow of information in spatial
models.\\

\textbf{Acknolwedgements.} We would like to thank Pouya Adrom for a
timely pointer to Kan extensions and Francisco Lobo for a
number of very fruitful discussions about the structures we use.


\nocite{*}
\bibliographystyle{eptcs}
\bibliography{rsjournal}

\begin{thebibliography}{10}
\providecommand{\bibitemdeclare}[2]{}
\providecommand{\surnamestart}{}
\providecommand{\surnameend}{}
\providecommand{\urlprefix}{Available at }
\providecommand{\url}[1]{\texttt{#1}}
\providecommand{\href}[2]{\texttt{#2}}
\providecommand{\urlalt}[2]{\href{#1}{#2}}
\providecommand{\doi}[1]{doi:\urlalt{http://dx.doi.org/#1}{#1}}
\providecommand{\bibinfo}[2]{#2}

\bibitemdeclare{inproceedings}{ArrighiCausalGraphs}
\bibitem{ArrighiCausalGraphs}
\bibinfo{author}{Pablo \surnamestart Arrighi\surnameend} \&
  \bibinfo{author}{Gilles \surnamestart Dowek\surnameend}
  (\bibinfo{year}{2012}): \emph{\bibinfo{title}{Causal Graph Dynamics}}.
\newblock In \bibinfo{editor}{Artur \surnamestart Czumaj\surnameend},
  \bibinfo{editor}{Kurt \surnamestart Mehlhorn\surnameend},
  \bibinfo{editor}{Andrew \surnamestart Pitts\surnameend} \&
  \bibinfo{editor}{Roger \surnamestart Wattenhofer\surnameend}, editors: {\sl
  \bibinfo{booktitle}{Automata, Languages, and Programming}},
  \bibinfo{publisher}{Springer Berlin Heidelberg}, \bibinfo{address}{Berlin,
  Heidelberg}, pp. \bibinfo{pages}{54--66}, \doi{10.1007/978-3-642-31585-5_9}.
\newblock \urlprefix\url{http://arxiv.org/abs/1202.1098}.

\bibitemdeclare{article}{ArrighiCayley1}
\bibitem{ArrighiCayley1}
\bibinfo{author}{Pablo \surnamestart Arrighi\surnameend},
  \bibinfo{author}{Simon \surnamestart Martiel\surnameend} \&
  \bibinfo{author}{Vincent \surnamestart Nesme\surnameend}
  (\bibinfo{year}{2012}): \emph{\bibinfo{title}{Generalized Cayley Graphs and
  Cellular Automata over them}}.
\newblock {\sl \bibinfo{journal}{CoRR}} \bibinfo{volume}{abs/1212.0027}.
\newblock \urlprefix\url{http://arxiv.org/abs/1212.0027}.

\bibitemdeclare{article}{ArrighiCayley2}
\bibitem{ArrighiCayley2}
\bibinfo{author}{Pablo \surnamestart Arrighi\surnameend},
  \bibinfo{author}{Simon \surnamestart Martiel\surnameend} \&
  \bibinfo{author}{Vincent \surnamestart Nesme\surnameend}
  (\bibinfo{year}{2018}): \emph{\bibinfo{title}{Cellular automata over
  generalized Cayley graphs}}.
\newblock {\sl \bibinfo{journal}{Mathematical Structures in Computer Science}}
  \bibinfo{volume}{28}(\bibinfo{number}{3}), pp. \bibinfo{pages}{340--383},
  \doi{10.1017/S0960129517000044}.

\bibitemdeclare{inproceedings}{ArrighiCausalSurfaces}
\bibitem{ArrighiCausalSurfaces}
\bibinfo{author}{Pablo \surnamestart Arrighi\surnameend},
  \bibinfo{author}{Simon \surnamestart Martiel\surnameend} \&
  \bibinfo{author}{Zizhu \surnamestart Wang\surnameend} (\bibinfo{year}{2013}):
  \emph{\bibinfo{title}{Causal Dynamics of Discrete Surfaces}}.
\newblock In: {\sl \bibinfo{booktitle}{Proceedings 9th International Workshop
  on Developments in Computational Models, {DCM} 2013, Buenos Aires, Argentina,
  26 August 2013.}}, pp. \bibinfo{pages}{30--40}, \doi{10.4204/EPTCS.144.3}.

\bibitemdeclare{incollection}{Gandy}
\bibitem{Gandy}
\bibinfo{author}{Robin \surnamestart Gandy\surnameend} (\bibinfo{year}{1980}):
  \emph{\bibinfo{title}{Church's Thesis and Principles for Mechanisms}}.
\newblock In: {\sl \bibinfo{booktitle}{The Kleene Symposium}},
  \bibinfo{publisher}{North-Holland}, pp. \bibinfo{pages}{123--148},
  \doi{10.1016/S0049-237X(08)71257-6}.

\bibitemdeclare{inproceedings}{HaeuslerFiniteness}
\bibitem{HaeuslerFiniteness}
\bibinfo{author}{Edward~Hermann \surnamestart Haeusler\surnameend}
  (\bibinfo{year}{2015}): \emph{\bibinfo{title}{Finiteness and Computation in
  Toposes}}.
\newblock In: {\sl \bibinfo{booktitle}{Proceedings of the Eleventh
  International Workshop on Developments in Computational Models, {DCM} 2015,
  Cali, Colombia, October 28, 2015.}}, pp. \bibinfo{pages}{61--77},
  \doi{10.4204/EPTCS.204.6}.

\bibitemdeclare{mastersthesis}{LehnerKanExtensions}
\bibitem{LehnerKanExtensions}
\bibinfo{author}{Marina~Christina \surnamestart Lehner\surnameend}
  (\bibinfo{year}{2014}): \emph{\bibinfo{title}{All Concepts are {K}an
  Extensions: {K}an Extensions as the Most Universal of the Universal
  Constructions}}.
\newblock \bibinfo{type}{Undergraduate senior thesis}, \bibinfo{school}{Harvard
  College}.
\newblock \urlprefix\url{www.math.harvard.edu/theses/senior/lehner/lehner.pdf}.

\bibitemdeclare{book}{MaclaneCategoriesWork}
\bibitem{MaclaneCategoriesWork}
\bibinfo{author}{Saunders \surnamestart MacLane\surnameend}
  (\bibinfo{year}{1971}): \emph{\bibinfo{title}{Categories for the Working
  Mathematician}}.
\newblock \bibinfo{publisher}{Springer-Verlag}, \bibinfo{address}{New York},
  \doi{10.1007/978-1-4757-4721-8}.
\newblock \bibinfo{note}{Graduate Texts in Mathematics, Vol. 5}.

\bibitemdeclare{inproceedings}{Maignan}
\bibitem{Maignan}
\bibinfo{author}{Luidnel \surnamestart Maignan\surnameend} \&
  \bibinfo{author}{Antoine \surnamestart Spicher\surnameend}
  (\bibinfo{year}{2015}): \emph{\bibinfo{title}{{Global Graph
  Transformations}}}.
\newblock In: {\sl \bibinfo{booktitle}{{Proceedings of the Sixth International
  Workshop on Graph Computation Models}}}, {\sl \bibinfo{series}{{CEUR Workshop
  Proceedings}}} \bibinfo{volume}{1403}, \bibinfo{publisher}{{CEUR-WS.org}},
  pp. \bibinfo{pages}{34--49}.
\newblock \urlprefix\url{http://ceur-ws.org/Vol-1403/paper4.pdf}.

\bibitemdeclare{inproceedings}{ObtulowiczGPRmachines}
\bibitem{ObtulowiczGPRmachines}
\bibinfo{author}{Adam \surnamestart Obtu{\l}owicz\surnameend}
  (\bibinfo{year}{2012}): \emph{\bibinfo{title}{Generalized
  {G}andy-{P}{\u{a}}un-{R}ozenberg Machines for Tile Systems and Cellular
  Automata}}.
\newblock In \bibinfo{editor}{Marian \surnamestart Gheorghe\surnameend},
  \bibinfo{editor}{Gheorghe \surnamestart P{\u{a}}un\surnameend},
  \bibinfo{editor}{Grzegorz \surnamestart Rozenberg\surnameend},
  \bibinfo{editor}{Arto \surnamestart Salomaa\surnameend} \&
  \bibinfo{editor}{Sergey \surnamestart Verlan\surnameend}, editors: {\sl
  \bibinfo{booktitle}{Membrane Computing}}, \bibinfo{publisher}{Springer Berlin
  Heidelberg}, \bibinfo{address}{Berlin, Heidelberg}, pp.
  \bibinfo{pages}{314--332}, \doi{10.1007/978-3-642-28024-5_21}.

\bibitemdeclare{phdthesis}{RazaviPhdThesis}
\bibitem{RazaviPhdThesis}
\bibinfo{author}{Joseph \surnamestart Razavi\surnameend}
  (\bibinfo{year}{2017}): \emph{\bibinfo{title}{Information Flow in Spatial
  Models of Computation}}.
\newblock Ph.D. thesis, \bibinfo{school}{The University of Manchester}.
\newblock
  \urlprefix\url{https://www.research.manchester.ac.uk/portal/files/59970154/FULL_TEXT.PDF}.

\bibitemdeclare{article}{SchafferOverdetermination}
\bibitem{SchafferOverdetermination}
\bibinfo{author}{Jonathan \surnamestart Schaffer\surnameend}
  (\bibinfo{year}{2003}): \emph{\bibinfo{title}{Overdetermining Causes}}.
\newblock {\sl \bibinfo{journal}{Philosophical Studies: An International
  Journal for Philosophy in the Analytic Tradition}}
  \bibinfo{volume}{114}(\bibinfo{number}{1/2}), pp. \bibinfo{pages}{23--45},
  \doi{10.1023/A:1024457117218}.

\bibitemdeclare{inbook}{SiegKgraphs}
\bibitem{SiegKgraphs}
\bibinfo{author}{Wilfried \surnamestart Sieg\surnameend} \&
  \bibinfo{author}{John \surnamestart Byrnes\surnameend}
  (\bibinfo{year}{1996}): \emph{\bibinfo{title}{$K$-graph machines:
  generalizing Turing's machines and arguments}}, pp. \bibinfo{pages}{98--119}.
\newblock {\sl \bibinfo{series}{Lecture Notes in Logic}}~\bibinfo{volume}{6},
  \bibinfo{publisher}{Springer-Verlag}, \bibinfo{address}{Berlin}.
\newblock \urlprefix\url{https://projecteuclid.org/euclid.lnl/1235417017}.

\bibitemdeclare{article}{SiegAbstractParallel}
\bibitem{SiegAbstractParallel}
\bibinfo{author}{Wilfried \surnamestart Sieg\surnameend} \&
  \bibinfo{author}{John \surnamestart Byrnes\surnameend}
  (\bibinfo{year}{1999}): \emph{\bibinfo{title}{An Abstract Model For Parallel
  Computations: Gandy's Thesis}}.
\newblock {\sl \bibinfo{journal}{The Monist}}
  \bibinfo{volume}{82}(\bibinfo{number}{1}), pp. \bibinfo{pages}{150--164},
  \doi{10.5840/monist19998213}.
\newblock \urlprefix\url{https://www.jstor.org/stable/27903626}.

\bibitemdeclare{incollection}{StreetFibrationsYoneda}
\bibitem{StreetFibrationsYoneda}
\bibinfo{author}{Ross \surnamestart Street\surnameend} (\bibinfo{year}{1974}):
  \emph{\bibinfo{title}{Fibrations and {Y}oneda's lemma in a 2-category.}}
\newblock In: {\sl \bibinfo{booktitle}{Category Seminar}}, {\sl
  \bibinfo{series}{Lecture Notes in Mathematics}} \bibinfo{volume}{420},
  \bibinfo{publisher}{Springer Verlag}, pp. \bibinfo{pages}{104--133},
  \doi{10.1007/BFb0063102}.

\end{thebibliography}
\end{document}